\documentclass[twocolumn]{IEEEtran}
\usepackage{amsmath}
\usepackage{amssymb}
\usepackage{mathrsfs}
\usepackage{cite}
\usepackage{epsfig}
\usepackage{epsfig}
\usepackage{graphics}
\usepackage{hyperref}
\usepackage{epsfig}

\newtheorem{thm}{Theorem}
\newtheorem{lem}{Lemma}

\newtheorem{defi}{Definition}

\newtheorem{rem}{Remark}

\pagestyle{plain}
\begin{document}

\title{On the Secure DoF of the Single-Antenna  MAC}
\author{Ghadamali Bagherikaram, Abolfazl S. Motahari, Amir K. Khandani\\
Coding and Signal Transmission Laboratory,
University of Waterloo, Ontario, Canada\\
 Emails: \{gbagheri,abolfazl,khandani\}@cst.uwaterloo.ca
}
 \maketitle
\footnotetext{Financial support provided by Nortel, and the
corresponding matching funds by the Natural Sciences and Engineering
Research Council of Canada (NSERC), and Ontario Ministry of Research
\& Innovation (ORF-RE) are gratefully acknowledged.}
\begin{abstract}
A new achievability rate region for the secure discrete memoryless Multiple-Access-Channel (MAC) is presented. Thereafter, a novel secure coding scheme is proposed to achieve a positive Secure Degrees-of-Freedom (S-DoF) in the single-antenna MAC. This scheme converts the single-antenna system into a multiple-dimension system with fractional dimensions. The achievability scheme is based on the alignment of signals into a small sub-space at the eavesdropper, and the simultaneous separation of the signals at the intended receiver. Tools from
the field of Diophantine Approximation in number theory are used to analyze the probability of error in the coding scheme.
\end{abstract}
\section{Introduction}
The notion of information theoretic secrecy in communication systems
was first introduced in \cite{1}. The information
theoretic secrecy requires that the received signal of the
eavesdropper does not provide any information about the transmitted
messages. Following the pioneering works
of \cite{2} and \cite{3} which studied the
wiretap channel, many multi-user channel models have been
considered from a perfect secrecy point of view.
The secure Gaussian multiple-access-channel with an external eavesdropper
is introduced in \cite{4,5}. This channel consists of an ordinary Gaussian MAC and an
external eavesdropper. The capacity region of this channel is still an open problem
in the information theory field. For this channel, an achievable rate scheme based on Gaussian codebooks
is proposed in \cite{4}, where also the
sum secrecy capacity of the degraded Gaussian channel is
also found.

On the other hand, it is shown that the random coding argument may be insufficient to prove
capacity theorems for certain channels; instead, structure codes can be used to construct efficient channel
codes for Gaussian channels. In reference \cite{6}, nested lattice codes are used to provide secrecy in two-user Gaussian channels. In \cite{6} it is shown that structure codes can achieve a positive S-DoF in a two-user MAC. Especially, the achievability scheme of \cite{6} provides a S-DoF of $\frac{1}{2}$ for a small category of channel gains and for the other categories, it provides a S-DoF of strictly less than $\frac{1}{2}$.

In reference \cite{7}, the concept of interference alignment is introduced and has illustrated its capability
in achieving the full DoF of a class of two-user X channels.
In reference \cite{8}, a novel coding scheme applicable in networks with single antenna nodes is proposed. This scheme converts a single antenna system into an equivalent Multiple Input Multiple Output (MIMO) system with fractional dimensions.
In this work we use the notion of \emph{real alignment} of \cite{8} to prove that for almost all channel gains in the secure $K$ user single-antenna Gaussian MAC, we can achieve the S-DoF of $\frac{K-1}{K}$. The scheme of this work differs from that of \cite{6}, in the sense that our scheme achieves the S-DoF of $\frac{1}{2}$ for almost all channel gains.Therefore, the carve of S-DoF versus channel gains is almost certainly constant.

The rest of the paper is organized as follows: Section II provides some background and preliminaries. In section III, we present our results for the achievable S-DoF of the single-antenna MAC. Finally, section IV concludes the paper.
\section{Preliminaries}
Consider a secure $K$-user Gaussian single-antenna Multiple-Access-Channel (MAC). 
In this confidential setting, each user $k$ ($k\in \mathcal{K}\stackrel{\triangle}{=}\{1,2,...,K\}$) wishes to send the message $W_{k}$ to the intended receiver in $n$ uses of the channel, simultaneously and prevent the eavesdropper from having any information about the messages. At a specific time, the signals received by the intended receiver and the eavesdropper is given by
\begin{IEEEeqnarray}{rl}
Y&=\sum_{k=1}^{K}h_{k}X_{k}+\widetilde{W}_{1}\\ \nonumber
Z&=\sum_{k=1}^{K}h_{k,e}X_{k}+\widetilde{W}_{2},
\end{IEEEeqnarray}
where $X_{k}$ for $k\in\mathcal{K}$ is a real input scalar under an input average power constraint. We require that $E[X_{k}^{2}]\leq P$, $Y$ and $Z$ be real output scalars which are received by the destination and the eavesdropper, respectively, $h_{k}$ and $h_{k,e}$ for $k=1,2,...,K$ are fixed, real scalars which model the channel gains between the transmitters and the intended receiver and the eavesdropper, respectively. The channel state information is assumed to be known perfectly at all the transmitters and at all receivers, and $\widetilde{W}_{1}$, $\widetilde{W}_{2}$  are real Gaussian random variables with zero means and unit variances.
Let $X_{k}^{n}$, $Y^{n}$ and $Z^{n}$ denote the random channel inputs and random channel outputs over a block of $n$ samples. Furthermore, let $\widetilde{W}^{n}_{1}$,
and $\widetilde{W}^{n}_{2}$ denote the additive noises of the channels.
The elements of
$\widetilde{W}^{n}_{1}$ and $\widetilde{W}^{n}_{2}$ are independent zero mean Gaussian random variables with unit variances. In addition, $\widetilde{W}^{n}_{1}$ and $\widetilde{W}^{n}_{2}$ are independent of $X_{k}^{n}$'s and $W_{k}$'s. A
$((2^{nR_{1}},2^{nR_{2}},...,2^{nR_{k}}),n)$ secret code for the above channel consists of
the following components:

\emph{1}) $K$ secret message sets $\mathcal{W}_{k}=\{1,2,...,2^{nR_{k}}\}$.

\emph{2}) $K$  stochastic encoding functions $f_{k}(.)$ which map the secret messages to the transmitted symbols, i.e., $f_{k}: w_{k}\rightarrow X^{n}_{k}$ for each $w_{k}\in\mathcal{W}_{k}$. At encoder $k$, each codeword is designed according to the transmitter's average power constraint $P$.

\emph{3}) A decoding function $\phi(.)$ which maps the received symbols to estimate the messages: $\phi(Y^{n})\rightarrow (\hat{W_{1}},...,\hat{W_{K}}) $.

The reliability of the transmission is measured by the average probability of error, which is defined as the probability that the decoded messages are not equal
to the transmitted messages; that is
\begin{IEEEeqnarray}{rl}
P_{e}^{(n)}&=\frac{1}{\prod_{k=1}^{K}2^{nR_{k}}}\sum_{(w_{1},...,w_{K})\in\mathcal{W}_{1}\times....\times\mathcal{W_{K}}}\\ \nonumber &P\left\{\phi\left(Y^{n}\right)\neq \left(w_{1},...,w_{K}\right)|\left(w_{1},...,w_{k}\right)~ \hbox{is sent}\right\}.
\end{IEEEeqnarray}
The secrecy level is measured by a normalized equivocation, defined as follows: The normalized equivocation for each subset of messages $W_{\mathcal{S}}$ for $\mathcal{S}\subseteq\mathcal{K}$ is
\begin{equation}
\Delta_{\mathcal{S}}\stackrel{\triangle}{=}\frac{H(W_{\mathcal{S}}|Z^{n})}{H(W_{\mathcal{S}})}.
\end{equation}
The rate-equivocation tuple $(R_{1},...,R_{K},d)$ is said to be achievable for the Gaussian single-antenna multiple-access-channel with confidential messages, if for
any $\epsilon>0$, there
exists a sequence of $((2^{nR_{1}},...,2^{nR_{K}}),n)$ secret codes, such that
for sufficiently large $n$,
\begin{equation}
P_{e}^{(n)}\leq\epsilon,
\end{equation}
and
\begin{equation}
\Delta_{\mathcal{S}}\geq d-\epsilon,~~~~~~~\forall\mathcal{S}\subseteq\mathcal{K}.
\end{equation}
The perfect secrecy rate tuple $(R_{1},...,R_{K})$ is said to be achievable when $d=1$. When all the transmitted messages are perfectly secure, we have
\begin{equation}\label{eq1}
\Delta_{\mathcal{K}}\geq 1-\epsilon,
\end{equation}
In \cite{5} it is shown that when all of the $K$ messages are perfectly secure, then it guarantees that any subset of the messages become perfectly secure.

The total Secure Degrees-of Freedom (S-DoF) of $\eta$ is said to be achievable if the rate-equivocation tuple $(R_{1},...,R_{K},d=1)$ is achievable, and
\begin{equation}
\eta=\lim_{P\rightarrow\infty}\frac{\sum_{k=1}^{K}R_{k}}{\frac{1}{2}\log P}
\end{equation}
\section{S-DoF of the Single-Antenna MAC}
In this section, we consider the achievable S-DoF of the Gaussian single-antenna multiple-access-channel under the perfect secrecy constraint. In order to satisfy the perfect secrecy constraint, we use the random binning coding scheme to generate the codebooks. To maximize the achievable degrees of
freedom, we adopt the signal alignment scheme used in \cite{8} to separate the signals at the intended receiver and simultaneously align the signals into a small subspace at the eavesdropper.
The main results of this section are presented in the following theorems.
First, we present an achievable secrecy rate region for the discrete memoryless multiple-access-channel.
\begin{thm}\label{th1}
For the perfectly secure discrete memoryless multiple-access-channel of $P(y,z|x_{1},...,x_{K})$, the region of
\begin{IEEEeqnarray}{rl}\label{eq2}
\Big\{(R_{1},...,R_{K})|&\sum_{i\in\mathcal{S}}R_{i}\leq I(U_{\mathcal{S}};Y|U_{\mathcal{S}^{c}}),~~~\forall \mathcal{S}\subset\mathcal{K},\\ \nonumber&\sum_{k\in\mathcal{K}}R_{k}\leq\left[I(U_{\mathcal{K}};Y)-I(U_{\mathcal{K}};Z)\right]^{+}\Big\},
\end{IEEEeqnarray}
for any distribution of
\begin{IEEEeqnarray}{rl}\nonumber
&P(u_{1})P(u_{2})....P(u_{K})P(x_{1}|u_{1})P(x_{2}|u_{2})...P(x_{K}|u_{K})\\ \nonumber
&\times P(y,z|x_{1},...,x_{K}),
\end{IEEEeqnarray}
 is achievable.
\end{thm}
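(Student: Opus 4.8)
The plan is to combine a wiretap-style random binning argument with channel prefixing, treating the auxiliary variables $U_k$ as the effective channel inputs. First I would fix a product distribution $\prod_{k}P(u_k)P(x_k|u_k)$ and, for each user $k$, generate a codebook of $2^{n(R_k+\tilde R_k)}$ codewords $u_k^n(w_k,v_k)$ drawn i.i.d.\ from $\prod_{i=1}^{n}P(u_{k,i})$, indexed by a message $w_k\in\{1,\dots,2^{nR_k}\}$ and a randomization (confusion) index $v_k\in\{1,\dots,2^{n\tilde R_k}\}$. To send $w_k$, encoder $k$ draws $v_k$ uniformly at random and transmits $x_k^n$ obtained by passing $u_k^n(w_k,v_k)$ through the memoryless prefix channel $P(x_k|u_k)$. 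The indices $v_k$ serve as the dummy messages that confuse the eavesdropper, while the prefixing $U_k\to X_k$ supplies the extra freedom that appears in the region.

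For reliability I would use joint-typicality decoding at the intended receiver, searching for the unique tuple $(u_1^n,\dots,u_K^n)$ jointly typical with $Y^n$. A standard MAC error analysis, via a union bound over the error events indexed by the subset $\mathcal{S}$ of users decoded incorrectly, gives $P_e^{(n)}\to 0$ provided that for every $\mathcal{S}\subseteq\mathcal{K}$ one has $\sum_{i\in\mathcal{S}}(R_i+\tilde R_i)\le I(U_\mathcal{S};Y|U_{\mathcal{S}^c})-\delta$; in particular the full-set constraint reads $\sum_{k\in\mathcal{K}}(R_k+\tilde R_k)\le I(U_\mathcal{K};Y)-\delta$.

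The crux is the equivocation analysis establishing $\Delta_\mathcal{K}\ge 1-\epsilon$. Writing $W=W_\mathcal{K}$, $V=V_\mathcal{K}$ and $U^n=U_\mathcal{K}^n$, and using that the codewords determine the messages, I would invoke the identity $H(W|Z^n)=H(U^n)-I(U^n;Z^n)-H(U^n|W,Z^n)$. Here $H(U^n)$ is essentially $\sum_k n(R_k+\tilde R_k)$, and $I(U^n;Z^n)\le nI(U_\mathcal{K};Z)+n\delta$ by the memorylessness of the channel together with the i.i.d.\ codebook. The delicate term is $H(U^n|W,Z^n)=H(V|W,Z^n)$: I must show that, once the messages are fixed, the randomization codewords can be reliably resolved from $Z^n$, so that this term is $O(n\epsilon)$ by Fano's inequality. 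This amounts to a second, fictitious MAC-decoding step at the eavesdropper, which succeeds exactly when the confusion rates are tuned so that $\sum_{k}\tilde R_k\approx I(U_\mathcal{K};Z)$ (together with the corresponding subset conditions at $Z$). Making this Fano step rigorous, i.e.\ choosing the $\tilde R_k$ so that the eavesdropper's virtual decoder succeeds while the legitimate subset constraints are not violated, is the step I expect to be the main obstacle.

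Combining the pieces, I would set $\sum_{k}\tilde R_k=I(U_\mathcal{K};Z)$; the identity above then gives $H(W|Z^n)\ge \sum_k nR_k-n\epsilon=H(W)-n\epsilon$, i.e.\ perfect secrecy of the whole message set. Substituting the same choice into the full-set reliability constraint yields the sum-rate bound $\sum_{k\in\mathcal{K}}R_k\le I(U_\mathcal{K};Y)-I(U_\mathcal{K};Z)$, taken as $[\,\cdot\,]^{+}$ since a negative value forces the trivial point, while the proper-subset constraints $\sum_{i\in\mathcal{S}}R_i\le I(U_\mathcal{S};Y|U_{\mathcal{S}^c})$ follow once the $\tilde R_k$ are feasibly distributed across users (a polymatroid rate-splitting feasibility check). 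Finally, since the construction secures the entire set $W_\mathcal{K}$, the result of \cite{5} quoted above guarantees $\Delta_\mathcal{S}\ge 1-\epsilon$ for every $\mathcal{S}\subseteq\mathcal{K}$, which completes the achievability of the stated region.
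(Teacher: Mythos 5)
Your proposal follows essentially the same route as the paper's own proof: per-user random binning with channel prefixing through $P(x_k|u_k)$, joint-typicality decoding at the legitimate receiver, and an equivocation bound of the form $H(W_{\mathcal{K}}|Z^{n})\geq H(U_{\mathcal{K}}^{n})-I(U_{\mathcal{K}}^{n};Z^{n})-H(U_{\mathcal{K}}^{n}|W_{\mathcal{K}},Z^{n})$, where the last term is handled by Fano's inequality applied to a virtual decoder at the eavesdropper and $I(U_{\mathcal{K}}^{n};Z^{n})$ is single-letterized to $nI(U_{\mathcal{K}};Z)$. The point you flag as the main obstacle --- choosing the confusion rates $\tilde R_k$ summing to $I(U_{\mathcal{K}};Z)$ so that the eavesdropper's virtual MAC decoder succeeds while the legitimate receiver's subset constraints are preserved --- is exactly the step the paper also relies on with only a brief assertion, so you have identified the same load-bearing point rather than a divergence.
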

\begin{proof}
The proof is available in the Appendix.
\end{proof}
Reference \cite{5} derived an achievable rate region with Gaussian codebooks and power
control for the degraded Gaussian secure multiple-access-channel, when all the transmitters
and receivers are equipped with a single antenna. Theorem \ref{th1}, however, gives an
achievability secrecy rate region for the general discrete memoryless multiple-access-channel.
Our achievability rate region is also larger than the region of \cite{5} in a special case of the Gaussian degraded case.
When the transmitters and receivers are equipped with a single antenna, the total achieved S-DoF by using Gaussian codebooks is $0$.
Here, we will provide a coding scheme based on integer codebooks, and show that for almost all channel gains a positive total S-DoF is achievable. The following theorem illustrates our results.
\begin{thm}\label{th3}
For the Gaussian single antenna multiple-access-channel, a total $\frac{K-1}{K}$ secure degrees-of-freedom can be achieved for almost all channel gains.
\end{thm}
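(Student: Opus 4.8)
The plan is to specialize the achievable region of Theorem~\ref{th1} to a structured input that aligns all users at the eavesdropper while keeping them separable at the receiver. I would let user $k$ draw a data symbol $v_k$ uniformly from the integer (PAM) constellation $\{0,1,\dots,Q-1\}$, set the auxiliary variable $U_k=v_k$, and transmit $X_k=\frac{a}{h_{k,e}}v_k$, where $a$ is a common scaling constant fixed by the power constraint (so $P(x_k\mid u_k)$ is a point mass and the $U_k$ are independent, matching the product form required by Theorem~\ref{th1}). The division by $h_{k,e}$ is the alignment step: at the eavesdropper every contribution collapses onto one real direction, $Z=a\sum_{k}v_k+\widetilde{W}_2$, so $Z$ depends on $U_{\mathcal K}$ only through the single integer $S=\sum_k v_k$. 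At the legitimate receiver, $Y=a\sum_k g_k v_k+\widetilde{W}_1$ with $g_k=h_k/h_{k,e}$, and for almost all channel gains the $g_k$ are rationally independent, so the $K$ symbols land on distinct, resolvable points.

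Next I would evaluate the two single-letter mutual informations appearing in the sum-rate bound of Theorem~\ref{th1}. Because $Z$ is a noisy function of $S$ alone, the Markov chain $U_{\mathcal K}\to S\to Z$ gives $I(U_{\mathcal K};Z)\le H(S)\le\log(K(Q-1)+1)=\log Q+O(1)$, so the leakage never exceeds one symbol's worth of entropy regardless of $K$. For the receiver I would show that $(v_1,\dots,v_K)$ is recoverable from the single observation $Y$ with vanishing error; Fano's inequality then gives $H(U_{\mathcal K}\mid Y)=o(\log Q)$ and hence $I(U_{\mathcal K};Y)\ge K\log Q-o(\log Q)$. Substituting into Theorem~\ref{th1},
\[
I(U_{\mathcal K};Y)-I(U_{\mathcal K};Z)\ge(K-1)\log Q-O(1),
\]
so a total secure rate of order $(K-1)\log Q$ is achievable, and the perfect-secrecy requirement $\Delta_{\mathcal K}\ge 1-\epsilon$ is inherited directly from the random-binning construction underlying that theorem, with no extra equivocation analysis beyond the bound on $I(U_{\mathcal K};Z)$.

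The crux is the reliable-decoding claim, which is exactly where Diophantine approximation enters. Taking differences of constellation points reduces reliability to lower-bounding $|\sum_k g_k w_k|$ over nonzero integer vectors $w$ with $\|w\|_\infty<Q$. By a Borel--Cantelli estimate (the convergence half of the Khintchine--Groshev theorem), for almost every $(g_1,\dots,g_K)$ and every $\epsilon>0$ there is a constant $\kappa>0$ with $|\sum_k g_k w_k|\ge\kappa\,Q^{-(K-1)-\epsilon}$; the exceptional tuples form a set of measure zero, which is the source of the ``almost all channel gains'' qualifier. Fixing $a=\Theta(\sqrt P/Q)$ to meet $\E[X_k^2]\le P$, the received minimum distance is then at least $\Theta(\sqrt P\,Q^{-K-\epsilon})$, and a union bound over the $Q^K$ points combined with the Gaussian tail drives the symbol-error probability to zero whenever $\sqrt P\,Q^{-K-\epsilon}\to\infty$.

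Choosing $Q=\lfloor P^{(1-\delta)/(2K)}\rfloor$ for small $\delta>0$ (with $\epsilon$ smaller still) satisfies this condition and yields $\log Q=\frac{1-\delta}{2K}\log P$, so that
\[
\eta=\lim_{P\to\infty}\frac{\sum_k R_k}{\frac12\log P}\ge\frac{(K-1)\log Q}{\frac12\log P}=\frac{K-1}{K}(1-\delta),
\]
and letting $\delta\to0$ gives the claimed $\frac{K-1}{K}$. I expect the main obstacle to be establishing the almost-everywhere minimum-distance bound \emph{uniformly} in the growing constellation size $Q$ rather than for a single fixed tuple, together with the exponent bookkeeping that forces $\log Q$ to scale as $\frac{1}{2K}\log P$: it is precisely this balance between the power constraint, the vanishing-error requirement, and the one-symbol leakage penalty that produces the factor $\frac{K-1}{K}$.
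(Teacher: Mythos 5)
Your proposal is correct and follows essentially the same route as the paper: align at the eavesdropper by pre-dividing by $h_{k,e}$, use integer constellations with random binning inherited from Theorem~\ref{th1}, invoke the convergent part of Khintchine--Groshev to lower-bound the received minimum distance for almost all channel gains, and balance $Q$ against $P$ so the error probability vanishes while $I(U_{\mathcal K};Y)-I(U_{\mathcal K};Z)\approx(K-1)\log Q$ yields a S-DoF of $\frac{K-1}{K}$. The differences are cosmetic (one-sided PAM versus $[-Q,Q]$, a slightly different exponent parametrization, an explicit union bound the paper elides), and the uniformity-in-$Q$ concern you flag is already resolved because Khintchine--Groshev supplies a single constant valid simultaneously for all integer vectors.
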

\begin{proof}
Let us define $\widetilde{X}_{k}\stackrel{\triangle}{=}\frac{h_{k,e}}{A}X_{k}$ and $\widetilde{h}_{k}\stackrel{\triangle}{=}\frac{h_{k}}{h_{k,e}}$.
Without loss of generality let us assume that $\widetilde{h}_{K}=1$; the cannel model then is equivalent as follows:
\begin{IEEEeqnarray}{rl}
Y&=A\left[\sum_{k=1}^{K-1}\widetilde{h}_{k}\widetilde{X}_{k}+\widetilde{X}_{K}\right]+\widetilde{W}_{1}\\ \nonumber
Z&=A\sum_{k=1}^{K}\widetilde{X}_{k}+\widetilde{W}_{2},
\end{IEEEeqnarray}
where, $A^{2}E[\widetilde{X}_{k}^{2}]\leq\widetilde{P}\stackrel{\triangle}{=} h_{k,e}^{2} P$. In this model we say that the signals are aligned at the eavesdropper, according to the following definition:
\begin{defi}
The signals $\widetilde{X}_{1}$, $\widetilde{X}_{2}$,...,$\widetilde{X}_{K}$ are said to be aligned at a receiver
if its received signal is a rational combination of them.
\end{defi}
Note that, in $n$-dimensional Euclidean spaces ($n\geq 2$), two signals are aligned when they are received in the
same direction at the receiver. In general, $m$ signals are aligned at a receiver if they span a subspace with
a dimension less than $m$. The above definition, however, generalizes the concept of alignment for the one-dimensional real numbers. Our coding scheme is based on integer codebooks, which means that $\widetilde{X}_{k}\in \mathbb{Z}$ for all $k\in\mathcal{K}$. If some integer signals are aligned at a receiver, then their effect is similar to a single signal at high SNR regimes. This is due to the fact that rational numbers form a filed; therefore the sum of constellations from $\mathbb{Q}$ form a constellation in $\mathbb{Q}$ with an enlarged cardinality.

Before we present our achievability scheme, we need to define the rational dimension of a set of real numbers.
\begin{defi}(\emph{Rational Dimension})
The rational dimension of a set of real numbers $\{\widetilde{h}_{1},\widetilde{h}_{2},...,\widetilde{h}_{K-1},\widetilde{h}_{K}=1\}$ is
$M$ if there exists a set of real numbers $\{g_{1},g_{2},...,g_{M}\}$, such that each $\widetilde{h}_{k}$ can be represented as a rational combination of $g_{i}$'s, i.e., $\widetilde{h}_{k}=a_{k,1}g_{1}+a_{k,2}g_{2}+...+a_{k,M}g_{M}$, where $a_{k,i}\in\mathbb{Q}$ for all $k\in\mathcal{K}$ and $i\in\mathcal{M}$.
\end{defi}
In fact, the rational dimension of a set of channel gains is the effective dimension seen at the corresponding receiver. In particular, $\{\widetilde{h}_{1},\widetilde{h}_{2},...,\widetilde{h}_{K}\}$ are \emph{rationally independent} if the rational dimension is $K$, i.e., none of the $\widetilde{h}_{k}$ can be represented as the rational combination of other numbers.

Note that all of the channel gains $\widetilde{h}_{k}$ are generated independently with a distribution.
From the number theory, it is known that the set of all possible channel gains that are rationally independent has a Lebesgue measure $1$. Therefore, we can assume that $\{\widetilde{h}_{1},\widetilde{h}_{2},...,\widetilde{h}_{K}\}$ are rationally independent, almost surely. Our achievability coding scheme is as follows:
\subsubsection{Encoding}
Each transmitter limits its input symbols to a finite set, which is called the transmit constellation. Even
though it has access to the continuum of real numbers, restriction to a finite set has the benefit of easy
and feasible decoding at the intended receiver. The transmitter $k$ selects a constellation $\mathcal{V}_{k}$ to
send message $W_{k}$. The constellation points are chosen from integer points, i.e., $\mathcal{V}_{k}\subset\mathbb{Z}$. We assume that $\mathcal{V}_{k}$ is a bounded set. Hence, there is a constant $Q_{k}$ such that $\mathcal{V}_{k}\subset [-Q_{k},Q_{k}]$. The cardinality of $\mathcal{V}_{k}$ which limits the rate of message $W_{k}$ is denoted by $\|\mathcal{V}_{k}\|$.

Having formed the constellation, the transmitter $k$ constructs a random codebook for message $W_{k}$  with
rate $R_{k}$. This can be accomplished by choosing a probability distribution on the input alphabets. The
uniform distribution is the first candidate and it is selected for the sake of simplicity. Therefore, the stochastic
encoder $k$ generates $2^{n(I(\widetilde{X}_{k};Y|\widetilde{X}_{(\mathcal{K}-k)^{c}})+\epsilon_{k})}$ independent and identically
distributed sequences $\widetilde{x}_{k}^{n}$ according to the
distribution $P(\widetilde{x}_{k}^{n})=\prod_{i=1}^{n}P(\widetilde{x}_{k,i})$, where $P(\widetilde{x}_{k,i})$ denotes
the probability distribution function of the uniformly distributed random variable $\widetilde{x}_{k,i}$ over
$\mathcal{V}_{k}$. Next, these sequences are randomly distributed into $2^{nR_{k}}$ bins. Index each of the
bins by $w_{k}\in\{1,2,...,2^{nR_{k}}\}$.

For each user $k\in\mathcal{K}$, to send message $w_{k}$, the
transmitter looks for a $\widetilde{x}_{k}^{n}$ in bin $w_{k}$. The rates are such that there exist more than one
$\widetilde{x}_{k}^{n}$. The transmitter randomly chooses one of them and sends $x_{k}^{n}=A\frac{\widetilde{x}_{k}^{n}}{h_{k,e}}$.
The parameter $A$ controls the input power.
\subsubsection{Decoding}
At a specific time, the received signal at the legitimate receiver is as follows:
\begin{IEEEeqnarray}{rl}\nonumber
Y=A\left[\widetilde{h}_{1}\widetilde{X}_{1}+\widetilde{h}_{2}\widetilde{X}_{2}+...+\widetilde{h}_{K-1}\widetilde{X}_{K-1}+\widetilde{X}_{K}\right]+\widetilde{W}_{1}
\end{IEEEeqnarray}
The legitimate receiver passes the received signal $Y$ through a hard decoder. The hard decoder looks for
a point $\widetilde{Y}$ in the received constellation
$\mathcal{V}_{r}=A\left[\widetilde{h}_{1}\mathcal{V}_{1}+\widetilde{h}_{2}\mathcal{V}_{2}+...+\widetilde{h}_{K-1}\mathcal{V}_{K-1}+\mathcal{V}_{K}\right]$
which is the nearest point to  the received signal $Y$. Therefore, the continuous channel changes to a discrete one in
which the input symbols are taken from the transmit constellations $\mathcal{V}_{k}$ and the output symbols belonging
to the received constellation $\mathcal{V}_{r}$. $\widetilde{h}_{k}$'s are rationally independent. This means that
the equation $A\left[\widetilde{h}_{1}X_{1}+\widetilde{h}_{2}X_{2}+...+\widetilde{h}_{K-1}X_{K-1}+X_{K}\right]=0$ has no rational solution.
This property implies that any real number $v_{r}$ belonging to the constellation $\mathcal{V}_{r}$ is uniquely decomposable as $v_{r}=A\sum_{k=1}^{K}
\widetilde{h}_{k}\widehat{\widetilde{X}}_{k}$. Note that if there exists another possible decomposition
$\widetilde{v}_{r}=A\sum_{k=1}^{K}
\widetilde{h}_{k}\widehat{\widetilde{X}}_{k}^{'}$, then $\widetilde{h}_{k}$'s have to be rationally-dependent, which
is a contradiction. We refer to this property as property $\Gamma$. This property in fact implies that if there is no additive noise
in the channel, then the receiver can decode all the transmitted signals with zero error probability.
\begin{rem}
In a random environment, it is easy to show that the set of channel gains which are rationally-dependent has a measure of zero, with respect to the Lebesgue measure. Therefore, Property $\Gamma$ is satisfied almost surely.
\end{rem}
\subsubsection{Error Probability Analysis}
Let $d_{\min}$ denote the minimum distance in the received constellation $\mathcal{V}_{r}$. Having property $\Gamma$, the
receiver can decode the transmitted signals. Let $V_{r}$ and $\hat{V}_{r}$ be the transmitted and decoded  symbols, respectively. The
probability of error i.e., $P_{e}=P(\hat{V}_{r}\neq V_{r})$, is bounded as follows:
\begin{equation}\label{eq10}
P_{e}\leq Q(\frac{d_{\min}}{2})\leq \exp(-\frac{d_{\min}^{2}}{8})
\end{equation}
where $Q(x)=\frac{1}{\sqrt{2\pi}}\int_{x}^{\infty}\exp(-\frac{t^{2}}{2})dt$. Note that finding $d_{\min}$ is generally not easy. Using Khintchine and Groshev theorems, however, it is possible to lower bound the minimum distance. Here we explain
some backgrounds to use the theorems of Khintchine and Groshev.

The field of Diophantine approximation in number theory deals with approximation of real numbers
with rational numbers. The reader is referred to \cite{9,10} and the references therein. The Khintchine
theorem is one of the cornerstones in this field. This theorem provides a criteria for a given function $\psi:\mathbb{N}\rightarrow\mathbb{R}_{+}$
and real number $h$, such that $|p + \widetilde{h}q| < \psi(|q|)$ has either infinitely many solutions or at most finitely
many solutions for $(p,q)\in\mathbb{Z}^{2}$. Let $\mathcal{A}(\psi)$ denote the set of real numbers, such that $|p +\widetilde{h}q| < \psi(|q|)$ has
an infinite number of solutions in integers. The theorem has two parts. The first part is the convergent part and
states that if $\psi(|q|)$ is convergent, i.e.,
\begin{equation}
\sum_{q=1}^{\infty}\psi(q)< \infty
\end{equation}
then $\mathcal{A}(\psi)$ has a measure of zero with respect to the Lebesgue measure. This part can be rephrased in more
convenient way, as follows: For almost all real numbers, $|p+\widetilde{h}q| > \psi(|q|)$ holds for all $(p, q) \in\mathbb{Z}^{2}$ except
for a finite number of them. Since the number of integers violating the inequality is finite, one can find a
constant $c$ such that
\begin{equation}
|p + \widetilde{h}q| > c\psi(|q|)
\end{equation}
holds for all integers $p$ and $q$, almost surely. The divergent part of the theorem states that $\mathcal{A}(\psi)$ has the
full measure, i.e. the set $\mathbb{R} - \mathcal{A}(\psi)$ has measure of zero provided that $\psi$ is decreasing and $\psi(|q|)$ is divergent,
i.e.,
\begin{equation}
\sum_{q=1}^{\infty}\psi(q)=\infty.
\end{equation}
There is an extension to Khintchine's theorem which regards the approximation of linear forms. Let
$\mathbf{\widetilde{h}} = (\widetilde{h}_{1}, \widetilde{h}_{2},...,\widetilde{h}_{K-1})$ and $\mathbf{q} = (q_{1}, q_{2},... , q_{K-1})$ denote $(K-1)$-tuples
in $\mathbb{R}^{K-1}$ and $\mathbb{Z}^{K-1}$, respectively.
Let $\mathcal{A}_{K-1}(\psi)$ denote the set of $(K-1)$-tuple real numbers $\mathbf{\widetilde{h}}$ such that
\begin{equation}
|p + q_{1}\widetilde{h}_{1} + q_{2}\widetilde{h}_{2} +... +q_{K-1}\widetilde{h}_{K-1}| <\psi(|\mathbf{q}|_{\infty})
\end{equation}
has infinitely many solutions for $p\in \mathbb{Z}$ and $\mathbf{q}\in\mathbb{Z}^{K-1}$. Here, $|\mathbf{q}|_{\infty}$ is the supreme norm of $\mathbf{q}$ which is defined as $\max_{k} |q_{k}|$.
The following theorem illustrates the Lebesgue measure of the set $\mathcal{A}_{K-1}(\psi)$.
\begin{thm}(Khintchine-Groshev)
Let $\psi:\mathbb{N}\rightarrow \mathbb{R}_{+}$. Then, the set $\mathcal{A}_{K-1}(\psi)$ has a measure of zero provided that
\begin{equation}\label{eq9}
\sum_{q=1}^{\infty}q^{K-2}\psi(q)<\infty
\end{equation}
and has the full measure if
\begin{equation}
\sum_{q=1}^{\infty}q^{K-2}\psi(q)=\infty~~~~\hbox{and $\psi$ is monotonic}
\end{equation}
\end{thm}
In this paper, we are interested in the convergent part of the theorem. Moreover, given an arbitrary $\epsilon > 0$ the
function $\psi(q) =\frac{1}{q^{K-1+\epsilon}}$ satisfies the condition of (\ref{eq9}). In fact, the convergent part of the above theorem can be
stated as follows: For almost all $K-1$-tuple real numbers $\mathbf{\widetilde{h}}$ there exists a constant $c$, such that
\begin{equation}\nonumber
|p + q_{1}\widetilde{h}_{1} + q_{2}\widetilde{h}_{2} + . . . + q_{K-1}\widetilde{h}_{K-1}| >\frac{c}{(\max_{k} |q_{k}|)^{K-1+\epsilon}}
\end{equation}
holds for all $p \in\mathbb{Z}$ and $\mathbf{q}\in\mathbb{Z}^{K-1}$.
The Khintchine-Groshev theorem can be used to bound the minimum distance of points in the received
constellation $\mathcal{V}_{r}$. In fact, a point in the received constellation has a linear form of
$v_{r} =A\left[\widetilde{h}_{1}v_{1}+\widetilde{h}_{2}v_{2}+...+\widetilde{h}_{K-1}v_{K-1}+v_{K}\right]$. Therefore, we can conclude that
\begin{equation}
d_{\min} >\frac{Ac}{(\max_{k\in\{1,2,...,K-1\}}Q_{k})^{K-1+\epsilon}}.
\end{equation}
The probability of error in hard decoding, see (\ref{eq10}), can be bounded as:
\begin{equation}\label{eq11}
P_{e} < \exp\left(-\frac{(Ac)^{2}}{8(\max_{k\in\{1,2,...,K-1\}}Q_{k})^{2K-2+2\epsilon}}\right)
\end{equation}
Let us assume that $Q_{k}$ for all $k\in\{1,2, . . . ,K-1\}$ is $Q=\lfloor\widetilde{P}^{\frac{1-\epsilon}{2(K+\epsilon)}}\rfloor$.
Moreover, since $E[\widetilde{X}_{k}^{2}]\leq A^{2}Q_{k}^{2}\leq \widetilde{P}$, we can choose $A=\widetilde{P}^{\frac{K-1+2\epsilon}{2(K+\epsilon)}}$.
Substituting in (\ref{eq11}) yields
\begin{equation}
P_{e} < \exp (-\frac{c^{2}}{8}\widetilde{P}^{\epsilon}).
\end{equation}
Thus, $P_{e}\rightarrow 0$ when $\widetilde{P}\rightarrow \infty$ or equivalently $P\rightarrow \infty$.

\subsubsection{Equivocation Calculation}
Since the equivocation analysis of Theorem \ref{th1} is valid for any input distribution, the integer inputs satisfy the perfect secrecy constraint.

\subsubsection{S-DoF Calculation}
The maximum achievable sum rate is as follows:
\begin{IEEEeqnarray}{rl}
\sum_{k\in\mathcal{K}}R_{k}&= I(\widetilde{X}_{1},\widetilde{X}_{2},...,\widetilde{X}_{K};Y)-I(\widetilde{X}_{1},\widetilde{X}_{2},...,\widetilde{X}_{K};Z)\\ \nonumber
&=H(\widetilde{X}_{1},\widetilde{X}_{2},...,\widetilde{X}_{K}|Z)-H(\widetilde{X}_{1},\widetilde{X}_{2},...,\widetilde{X}_{K}|Y)\\ \nonumber
&\stackrel{(a)}{\geq}H(\widetilde{X}_{1},\widetilde{X}_{2},...,\widetilde{X}_{K}|Z)-1-P_{e}\log\|\widetilde{\mathcal{X}}\|\\ \nonumber
&\stackrel{(b)}{\geq}H(\widetilde{X}_{1},\widetilde{X}_{2},...,\widetilde{X}_{K}|\sum_{k\in\mathcal{K}}\widetilde{X}_{k})-1-P_{e}\log\|\widetilde{\mathcal{X}}\|\\ \nonumber
&\stackrel{(c)}{=}\sum_{k\in\mathcal{K}}H(\widetilde{X}_{k})-H(\sum_{k\in\mathcal{K}}\widetilde{X}_{k})-1-P_{e}\log\|\widetilde{\mathcal{X}}\|\\ \nonumber
&\stackrel{(d)}{=}K\log(2Q+1)-\log(2KQ+1)-1-P_{e}\log\|\widetilde{\mathcal{X}}\|,
\end{IEEEeqnarray}
where $(a)$ follows from Fano's inequality, $(b)$ follows from the fact that conditioning always decreases entropy, $(c)$
follows from chain rule, and $(d)$ follows from the fact that $\widetilde{X}_{k}$ has uniform distribution over $\mathcal{V}_{k}=[-Q,Q]$.
The S-DoF can therefore be computed as follows:
\begin{IEEEeqnarray}{rl}
\eta=\lim_{P\rightarrow \infty}\frac{\sum_{k\in\mathcal{K}}R_{k}}{\frac{1}{2}\log P}=\frac{(K-1)(1-\epsilon)}{K+\epsilon}
\end{IEEEeqnarray}
Since $\epsilon$ can be arbitrary small, then $\eta=\frac{K-1}{K}$ is indeed achievable.
\end{proof}
\section{Conclusion}
Unlike \cite{6}, we presented a coding scheme that can achieve the total amount of $\frac{K-1}{K}$ S-DoF for the $K$ user secure MAC for almost all channel gains. Our scheme is based on a single layer integer coding and random binning.
\appendix
\subsection{Proof of Theorem \ref{th1}}
1) \textit{Codebook Generation}:
The structure of the encoder for user $k\in\mathcal{K}$ is
as follows: Fix $P(u_{k})$ and $P(x_{k}|u_{k})$. The stochastic
encoder $k$ generates $2^{n(I(U_{k};Y|U_{(\mathcal{K}-k)^{c}})+\epsilon_{k})}$ independent and
identically distributed sequences $u_{k}^{n}$ according to the
distribution $P(u_{k}^{n})=\prod_{i=1}^{n}P(u_{k,i})$. Next,
randomly distribute these sequences into $2^{nR_{k}}$ bins. Index each of the
bins by $w_{k}\in\{1,2,...,2^{nR_{k}}\}$.

2) \textit{Encoding}: For each user $k\in\mathcal{K}$, to send message $w_{k}$, the
transmitter looks for a $u_{k}^{n}$ in bin $w_{k}$. The rates are such that there exist more than one
$u_{k}^{n}$. The transmitter randomly chooses one of them and then generates $x_{k}^{n}$ according
to $P(x_{k}^{n}|u_{k}^{n})=\prod_{i=1}^{n}P(x_{k,i}|u_{k,i})$ and sent it.

3) \textit{Decoding}: The received signals at the legitimate
receiver, $y^{n}$, is the output of the
channel $P(y^{n}|x_{\mathcal{K}}^{n})=\prod_{i=1}^{n}P(y_{i}|x_{\mathcal{K},i})$.
The legitimate receiver looks for the unique sequence $u_{\mathcal{K}}^{n}$ such
that $(u_{\mathcal{K}}^{n},y^{n})$ is jointly typical and declares the
indices of the bins containing $u_{k}^{n}$ as the messages received.

4) \textit{Error Probability Analysis}: Since the region of
(\ref{eq2}) is a subset of the capacity region of the multiple-access-channel without secrecy constraint, then the error probability analysis is straightforward and omitted here.

5) \textit{Equivocation Calculation}: To satisfy the perfect secrecy constraint, we need to prove the requirement of (\ref{eq1}). From $H(W_{\mathcal{K}}|Z^{n})$ we have
\begin{IEEEeqnarray}{rl}
H(W_{\mathcal{K}}|Z^{n}) &= H(W_{\mathcal{K}},Z^{n})-H(Z^{n})\\
\nonumber &= H(W_{\mathcal{K}},U_{\mathcal{K}}^{n},Z^{n})- H(U_{\mathcal{K}}^{n}|W_{\mathcal{K}},Z^{n})\\ \nonumber &-H(Z^{n})\\
\nonumber &=H(W_{\mathcal{K}},U_{\mathcal{K}}^{n})+ H(Z^{n}|W_{\mathcal{K}},U_{\mathcal{K}}^{n})\\ \nonumber &- H(U_{\mathcal{K}}^{n}|W_{\mathcal{K}},Z^{n})-H(Z^{n})\\
\nonumber &\stackrel{(a)}{\geq}
H(W_{\mathcal{K}},U_{\mathcal{K}}^{n})+H(Z^{n}|W_{\mathcal{K}},U_{\mathcal{K}}^{n})-n\epsilon_{n}\\ \nonumber &-H(Z^{n})\\
\nonumber
&\stackrel{(b)}{=}H(W_{\mathcal{K}},U_{\mathcal{K}}^{n})+H(Z^{n}|U_{\mathcal{K}}^{n})-n\epsilon_{n}-H(Z^{n}) \\
\nonumber&\stackrel{(c)}{\geq}
H(U_{\mathcal{K}}^{n})+H(Z^{n}|U_{\mathcal{K}}^{n})- n\epsilon_{n}-H(Z^{n}) \\
\nonumber &= H(U_{\mathcal{K}}^{n})- I(U_{\mathcal{K}}^{n};Z^{n})- n\epsilon_{n}\\
\nonumber &\stackrel{(d)}{\geq} I(U_{\mathcal{K}}^{n};Y^{n})-I(U_{\mathcal{K}}^{n};Z^{n})- n\epsilon_{n}\\
\nonumber &\stackrel{(e)}{\geq} n\sum_{k\in\mathcal{K}}R_{k}-n\epsilon_{n}-n\delta_{1n}-n\delta_{4n}\\ \nonumber &= H(W_{\mathcal{K}})-n\epsilon_{n}-n\delta_{1n}-n\delta_{4n},
\end{IEEEeqnarray}
where $(a)$ follows from Fano's inequality, which states that for
sufficiently large $n$, $H(U_{\mathcal{K}}^{n}|W_{\mathcal{K}},Z^{n})$
$\leq h(P_{we}^{(n)})$ $+nP_{we}^{n}R_{w}\leq n\epsilon_{n}$. Here
$P_{we}^{n}$ denotes the wiretapper's error probability of decoding
$u_{\mathcal{K}}^{n}$ in the case that the bin numbers $w_{\mathcal{K}}$ are known to the eavesdropper and
$R_{w}=I(U_{\mathcal{K}};Z)$.
Since the sum rate is small enough, then $P_{we}^{n}\rightarrow 0$
for sufficiently large $n$. $(b)$ follows from the following Markov
chain: $W_{\mathcal{K}}\rightarrow U_{\mathcal{K}}^{n}\rightarrow$
$Z^{n}$. Hence, we have
$H(Z^{n}|W_{\mathcal{K}},U_{\mathcal{K}}^{n})=H(Z^{n}|U_{\mathcal{K}}^{n})$.
$(c)$ follows from the fact that
$H(W_{\mathcal{K}},U_{\mathcal{K}}^{n})\geq H(U_{\mathcal{K}}^{n})$.
$(d)$ follows from that fact that $H(U_{\mathcal{K}}^{n})\geq
I(U_{\mathcal{K}}^{n};Y^{n})$. $(e)$ follows from the following lemma:

\begin{lem}
Assume $U_{\mathcal{K}}^{n}$, $Y^{n}$ and $Z^{n}$ are generated according to the achievability scheme of Theorem \ref{th1} we then have,
\begin{IEEEeqnarray}{rl}\nonumber \label{eq3}
nI(U_{\mathcal{K}};Y)-n\delta_{1n}\leq I(U_{\mathcal{K}}^{n};Y^{n})\leq nI(U_{\mathcal{K}};Y)+n\delta_{2n}\\ \nonumber \label{eq4}
nI(U_{\mathcal{K}};Z)-n\delta_{3n}\leq I(U_{\mathcal{K}}^{n};Z^{n})\leq nI(U_{\mathcal{K}};Z)+n\delta_{4n},
\end{IEEEeqnarray}
where, $\delta_{1n},\delta_{2n},\delta_{3n},\delta_{4n}\rightarrow 0$ when $n\rightarrow\infty$.
\begin{proof}
Let $A^{(n)}_{\epsilon}(P_{U_{\mathcal{K}},Z})$ denote the set of typical sequences $(u_{\mathcal{K}}^{n},z^{n})$ with respect to $P_{U_{\mathcal{K}},Z}$, and
\begin{IEEEeqnarray}{lr}
\zeta=\left\{
        \begin{array}{ll}
          1, & (u_{\mathcal{K}}^{n},z^{n})\in A^{(n)}_{\epsilon} \\
          0, & \hbox{otherwise}
        \end{array}
      \right.
\end{IEEEeqnarray}
be the corresponding indicator function. We expand $I(U_{\mathcal{K}}^{n};Z^{n},\zeta)$ and $I(U_{\mathcal{K}}^{n},\zeta;Z^{n})$ as follows:
\begin{IEEEeqnarray}{rl}
I(U_{\mathcal{K}}^{n};Z^{n},\zeta)&=I(U_{\mathcal{K}}^{n};Z^{n})+I(U_{\mathcal{K}}^{n};\zeta|Z^{n})\\ \nonumber
&=I(U_{\mathcal{K}}^{n};\zeta)+I(U_{\mathcal{K}}^{n};Z^{n}|\zeta),
\end{IEEEeqnarray}
and
\begin{IEEEeqnarray}{rl}
I(U_{\mathcal{K}}^{n},\zeta;Z^{n})&=I(U_{\mathcal{K}}^{n};Z^{n})+I(\zeta;Z^{n}|U_{\mathcal{K}}^{n})\\ \nonumber
&=I(\zeta;Z^{n})+I(U_{\mathcal{K}}^{n};Z^{n}|\zeta).
\end{IEEEeqnarray}
Therefore, we have
\begin{IEEEeqnarray}{rl}
I(U_{\mathcal{K}}^{n};Z^{n}|\zeta)-I(U_{\mathcal{K}}^{n};\zeta|Z^{n})&\leq I(U_{\mathcal{K}}^{n};Z^{n})\\ \nonumber &\leq I(U_{\mathcal{K}}^{n};Z^{n}|\zeta)+I(\zeta;Z^{n}).
\end{IEEEeqnarray}
Note that $I(\zeta;Z^{n})\leq H(\zeta)\leq 1$ and $I(U_{\mathcal{K}}^{n};\zeta|Z^{n})\leq H(\zeta|Z^{n})\leq H(\zeta)\leq 1$. Thus, the above inequality implies that
\begin{IEEEeqnarray}{lr}\label{eq7}
\sum_{j=0}^{1}P(\zeta=j)I(U_{\mathcal{K}}^{n};Z^{n}|\zeta=j)-1\leq I(U_{\mathcal{K}}^{n};Z^{n})\\ \nonumber\leq \sum_{j=0}^{1}P(\zeta=j)I(U_{\mathcal{K}}^{n};Z^{n}|\zeta=j)+1.
\end{IEEEeqnarray}
According to the joint typicality property, we have
\begin{IEEEeqnarray}{rl}\label{eq8}
0&\leq P(\zeta=1)I(U_{\mathcal{K}}^{n};Z^{n}|\zeta=1)\\ \nonumber&\leq nP((u_{\mathcal{K}}^{n},z^{n})\in A^{(n)}_{\epsilon}(P_{U_{\mathcal{K}},Z} )) \log\|\mathcal{Z}\|\\ \nonumber&\leq n\epsilon_{n}\log \|\mathcal{Z}\|.
\end{IEEEeqnarray}
Now consider the term $P(\zeta=0)I(U_{\mathcal{K}}^{n};Z^{n}|\zeta=0)$. Following the sequence joint typicality properties, we have
\begin{IEEEeqnarray}{rl}\nonumber
(1-\epsilon_{n})I(U_{\mathcal{K}}^{n};Z^{n}|\zeta=0)&\leq P(\zeta=0)I(U_{\mathcal{K}}^{n};Z^{n}|\zeta=0)\\ \label{eq6} &\leq I(U_{\mathcal{K}}^{n};Z^{n}|\zeta=0),
\end{IEEEeqnarray}
where
\begin{IEEEeqnarray}{rl}\nonumber
I(U_{\mathcal{K}}^{n};Z^{n}|\zeta=0)=\sum_{(u_{\mathcal{K}}^{n},z^{n})\in A^{(n)}_{\epsilon}}P(u_{\mathcal{K}}^{n},z^{n})\log\frac{P(u_{\mathcal{K}}^{n},z^{n})}{P(u_{\mathcal{K}}^{n})P(z^{n})}.
\end{IEEEeqnarray}
Since $H(U_{\mathcal{K}},Z)-\epsilon_{n}\leq -\frac{1}{n}\log P(u_{\mathcal{K}}^{n},z^{n})\leq H(U_{\mathcal{K}},Z)+\epsilon_{n}$, then we have,
\begin{IEEEeqnarray}{lr}\nonumber
n\left[-H(U_{\mathcal{K}},Z)+H(U_{\mathcal{K}})+H(Z)-3\epsilon_{n}\right]\leq I(U_{\mathcal{K}}^{n};Z^{n}|\zeta=0)\\ \nonumber\leq n\left[-H(U_{\mathcal{K}},Z)+H(U_{\mathcal{K}})+H(Z)+3\epsilon_{n}\right],
\end{IEEEeqnarray}
or equivalently,
\begin{IEEEeqnarray}{rl}\label{eq5}
n\left[I(U_{\mathcal{K}};Z)-3\epsilon_{n}\right]&\leq I(U_{\mathcal{K}}^{n};Z^{n}|\zeta=0)\\ \nonumber&\leq n\left[I(U_{\mathcal{K}};Z)+3\epsilon_{n}\right].
\end{IEEEeqnarray}
By substituting (\ref{eq5}) into (\ref{eq6}) and then substituting (\ref{eq6}) and (\ref{eq8}) into (\ref{eq7}), we get the desired result,
\begin{IEEEeqnarray}{rl}\nonumber
nI(U_{\mathcal{K}};Z)-n\delta_{1n}\leq I(U_{\mathcal{K}}^{n};Z^{n})\leq nI(U_{\mathcal{K}};Z)+n\delta_{2n},
\end{IEEEeqnarray}
where
\begin{IEEEeqnarray}{lr}
\delta_{1n}=\epsilon_{n}I(U_{\mathcal{K}};Z)+3(1-\epsilon_{n})\epsilon_{n}+\frac{1}{n}\\ \nonumber
\delta_{2n}=3\epsilon_{n}+\epsilon_{n}\log\|\mathcal{Z}\|+\frac{1}{n}.
\end{IEEEeqnarray}
Following the same steps, one can prove the second inequality.
\end{proof}
\end{lem}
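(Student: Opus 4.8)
The plan is to establish the two-sided bound for the $Z$ term, namely $nI(U_{\mathcal K};Z)-n\delta_{3n}\le I(U_{\mathcal K}^n;Z^n)\le nI(U_{\mathcal K};Z)+n\delta_{4n}$; the $Y$ term then follows verbatim after replacing $Z$ by $Y$ throughout, since $U_{\mathcal K}^n$ is generated independently of the channel and $P(y,z|x_{\mathcal K})$ is memoryless. The subtlety I want to isolate is that, although the pairs $(U_{\mathcal K,i},Z_i)$ are i.i.d.\ across time under the codebook construction (so one is tempted to claim exact single-letterization $I(U_{\mathcal K}^n;Z^n)=nI(U_{\mathcal K};Z)$), the robust route that survives the induced distribution of the scheme is to compare the block mutual information to its single-letter value through a joint-typicality indicator.

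First I would introduce the indicator $\zeta=\mathbf 1\{(u_{\mathcal K}^n,z^n)\in A^{(n)}_\epsilon(P_{U_{\mathcal K},Z})\}$ recording whether the realized pair is jointly $\epsilon$-typical. The engine of the argument is the chain rule for mutual information applied to the enlarged variable $(Z^n,\zeta)$ in its two possible orders, which yields $I(U_{\mathcal K}^n;Z^n)+I(U_{\mathcal K}^n;\zeta|Z^n)=I(U_{\mathcal K}^n;\zeta)+I(U_{\mathcal K}^n;Z^n|\zeta)$. Because $\zeta$ is binary, every term that involves $\zeta$ marginally is at most $H(\zeta)\le 1$ bit; this sandwiches $I(U_{\mathcal K}^n;Z^n)$ within one bit of the conditional quantity $I(U_{\mathcal K}^n;Z^n|\zeta)$, reducing the problem to evaluating the latter.

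Next I would split $I(U_{\mathcal K}^n;Z^n|\zeta)=\sum_{j}P(\zeta=j)I(U_{\mathcal K}^n;Z^n|\zeta=j)$ into its atypical and typical branches. For the atypical branch I would use the crude bound $I(U_{\mathcal K}^n;Z^n|\zeta=1)\le n\log\|\mathcal Z\|$ together with $P(\zeta=1)\le\epsilon_n$ from the AEP, so that this branch contributes at most $n\epsilon_n\log\|\mathcal Z\|=o(n)$. For the typical branch I would expand $I(U_{\mathcal K}^n;Z^n|\zeta=0)$ as an expectation of $\log\frac{P(u_{\mathcal K}^n,z^n)}{P(u_{\mathcal K}^n)P(z^n)}$ restricted to $A^{(n)}_\epsilon$ and invoke the AEP inequalities $H(\cdot)-\epsilon_n\le-\frac1n\log P(\cdot)\le H(\cdot)+\epsilon_n$ for the joint law and for the two marginals; these concentrate the log-ratio and give $I(U_{\mathcal K}^n;Z^n|\zeta=0)=n[I(U_{\mathcal K};Z)\pm 3\epsilon_n]$. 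Collecting the $\pm 1$ bit from $\zeta$, the $o(n)$ atypical term, the $3n\epsilon_n$ typicality slack, and the $(1-\epsilon_n)$ weighting of the typical branch into single symbols $\delta_{3n},\delta_{4n}$ that vanish as $n\to\infty$ completes the $Z$ bound.

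The step I expect to be the main obstacle is the bookkeeping of constants so that \emph{both} inequalities emerge with genuinely vanishing gaps. One must track that $P(\zeta=0)\ge 1-\epsilon_n$ multiplies the typical-branch mutual information, so that dividing it back out costs only an $O(\epsilon_n)$ factor relative to $I(U_{\mathcal K};Z)$, and that the crude atypical bound is harmless only because $\|\mathcal Z\|$ is finite and $\epsilon_n\to 0$. Everything else is a mechanical application of the chain rule and the AEP, and substituting $Y^n$ for $Z^n$ reproduces the first pair of inequalities with identically defined error terms $\delta_{1n},\delta_{2n}$.
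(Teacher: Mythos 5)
Your proposal follows the paper's proof essentially step for step: the same joint-typicality indicator $\zeta$, the same two chain-rule expansions of $I(U_{\mathcal K}^n;Z^n,\zeta)$ and $I(U_{\mathcal K}^n,\zeta;Z^n)$ to sandwich $I(U_{\mathcal K}^n;Z^n)$ within one bit of $I(U_{\mathcal K}^n;Z^n|\zeta)$, the same crude $n\log\|\mathcal Z\|$ bound on the atypical branch, and the same AEP concentration of the log-likelihood ratio on the typical branch. The only blemish (which you share with the paper) is that the labels of the typical and atypical branches of $\zeta$ get swapped between the definition and their use, but this does not affect the argument.
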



\begin{thebibliography}{9}
\bibitem{1}
C. E. Shannon, ``Communication Theory of Secrecy Systems", {\em Bell
System Technical Journal}, vol. 28, pp. 656-715, Oct. 1949.
\bibitem{2}
A. Wyner, ``The Wire-tap Channel", {\em Bell System Technical
Journal}, vol. 54, pp. 1355-1387, 1975
\bibitem{3}
I. Csisz´ar and J. K¨orner, ``Broadcast Channels with Confidential
Messages", {\em IEEE Trans. Inf. Theory}, vol. 24, no. 3, pp.
339-348, May 1978.
\bibitem{4}
E. Tekin and A. Yener, ``The Gaussian multiple access wire-tap channel",
{\em IEEE Trans. Inf. Theory}, vol. 54, no.12, pp.5747-5755, Dec.2008.
\bibitem{5}
E. Tekin and A. Yener, ``The general Gaussian multiple access and two-way
wire-tap channels: Achievable rates and cooperative jamming",
{\em IEEE Trans. Inf. Theory}, vol.54, no.6, pp.2735–2751, Jun. 2008.
\bibitem{6}
X. He, A. Yener, ``Providing Secrecy With Structured Codes: Tools and Applications to Two-User Gaussian Channels",
Submitted to {\em IEEE Trans. Inf. Theory}, available at:
\url{http://arxiv.org/PS_cache/arxiv/pdf/0907/0907.5388v1.pdf}.
\bibitem{7}
M. A. Maddah-Ali, A. S. Motahari, and A. K. Khandani, ``Communication over MIMO X channels: Interference alignment,
decomposition, and performance analysis," {\em IEEE Trans. Inf. Theory,} vol. 54, no. 8, pp. 3457-3470, August 2008 (also see earlier technical reports by the same authors referenced therein).
\bibitem{8}
A. S. Motahari,S. O. Gharan, and A. K. Khandani, ``Real Interference Alignment with Real Numbers", Submitted to
{\em IEEE Trans. Inf. Theory}, available at:
\url{http://arxiv.org/PS_cache/arxiv/pdf/0908/0908.1208v2.pdf}.
\bibitem{9}
W. M. Schmidt, {\em Diophantine approximation}. Berlin, Springer-Verlag, 1980.
\bibitem{10}
G. H. Hardy and E. M. Wright, {\em An introduction to the theory of numbers,} fifth edition, Oxford science publications, 2003.
\end{thebibliography}
\end{document}